%
%
%
%
%
%
%
\documentclass[%
 aip,
 amsmath,amssymb,
 reprint,%
]{revtex4-1}

\usepackage[dvipdfmx]{graphicx}
\usepackage{dcolumn}
\usepackage{bm}
\usepackage{amsmath, amsthm, amsxtra, amsfonts, amssymb,amscd,mathtools}
\usepackage{latexsym}
\usepackage{ulem,color}

\usepackage[utf8]{inputenc}
\usepackage[T1]{fontenc}
\usepackage{mathptmx}
\usepackage{etoolbox}

\def\eps{{\varepsilon}}
\def\N{{\mathbf N}}

\newtheorem{theorem}{Theorem}
\newtheorem{lemma}{Lemma}
\newtheorem{remark}{Remark}
\newtheorem{definition}{Definition}
\theoremstyle{definition}

\makeatletter
\def\@email#1#2{%
 \endgroup
 \patchcmd{\titleblock@produce}
  {\frontmatter@RRAPformat}
  {\frontmatter@RRAPformat{\produce@RRAP{*#1\href{mailto:#2}{#2}}}\frontmatter@RRAPformat}
  {}{}
}%
\makeatother
\begin{document}

\title{Synchronization and stability analysis of an exponentially diverging solution in a mathematical model of asymmetrically interacting agents}

\author{Yusuke Kato}
\email[]{yuukato@g.ecc.u-tokyo.ac.jp}
\author{Hiroshi Kori}
\affiliation{
Department of Complexity Science and Engineering, Graduate School of Frontier Sciences, The University of Tokyo, Kashiwa, Chiba 277-8561, Japan
}

\date{\today}

\begin{abstract}
This study deals with an existing mathematical model of asymmetrically interacting agents. We analyze the following two previously unfocused features of the model: (i) synchronization of growth rates and (ii) initial value dependence of damped oscillation.
By applying the techniques of variable transformation and time-scale separation, we perform the stability analysis of a diverging solution. We find that (i) all growth rates synchronize to the same value that is as small as the smallest growth rate and (ii) oscillatory dynamics appear if the initial value of the slowest-growing agent is sufficiently small. Furthermore, our analytical method proposes a way to apply stability analysis to an exponentially diverging solution, which we believe is also a contribution of this study. 
Although the employed model is originally proposed as a model of infectious disease, we do not discuss its biological relevance but merely focus on the technical aspects.

\end{abstract}

\maketitle

\begin{quotation}
Nonlinear dynamical systems exhibit a wide variety of behaviors, e.g., hysteresis, limit cycle, synchronization, chaos, etc.  
Elucidating these nonlinear phenomena is an important issue, as well as the development of new analytical methods.
Here, we analyze an existing nonlinear system and clarify the two properties that were not previously focused on, i.e., synchronization of growth rates and initial value dependence of damped oscillation. In the analysis, we propose a novel method to perform the linear stability analysis of an exponentially diverging solution, which is expected to help further investigate the nonlinear dynamics. 
\end{quotation}

\section{Introduction}
It is widely known that nonlinear systems show various complex behaviors \cite{strogatz2018nonlinear}. Those behaviors are extensively studied since the late 19th century: the discovery of chaotic dynamics on a strange attractor \cite{lorenz1963deterministic} and the analysis of synchronization transition \cite{winfree1967biological,kuramoto84} are examples of such theoretical research. Nonlinear dynamical systems are also used to describe various phenomena in nature and society, especially in population dynamics \cite{murray2002mathematical}. For example, the outbreak of a certain insect is explained as hysteresis \cite{ludwig1978qualitative}, 
varying prey-predator populations are described as periodic orbits \cite{lotka1920undamped, volterra1928variations}, and synchronous firefly flashing \cite{mirollo1990synchronization} or frog calls \cite{aihara2011complex} are modeled with coupled nonlinear oscillators. 
Therefore, it is important to establish theoretical frameworks for nonlinear phenomena, which would provide deeper insight into complex phenomena and contribute to broadening applications.
In addition, since most of the nonlinear differential equations cannot be explicitly solved, the development of novel analytical techniques is crucial to further study the nonlinear system. 

In the present paper, we focus on peculiar dynamical behavior observed in the model proposed in Ref \cite{nowak1992}. 
The model is described by a $2n$-dimensional dynamical system. The model elements are divided into 2 groups of $n$ agents (i.e., $v_i$ and $x_i$ in Eq. \eqref{original_model}) and these groups interact with each other asymmetrically. 

In the original paper, the authors investigated the dynamics of this model in both analytical and numerical ways \cite{nowak1992}. However, there are several open questions in their study. First, they did not analyze the following two features of numerical results: the amount of agents in one group (i.e., $v_i$ in Eq. \eqref{original_model}) (i) initially oscillates and then decreases to a very low level and (ii) increases extremely slowly despite the existence of fast-growing agents. Second, the analysis in the original paper was valid only under the assumption that the dynamics of agents in the other group (i.e., $x_i$ in Eq. \eqref{original_model}) are sufficiently fast.


We are particularly concerned with the two features observed numerically because they are considered to reflect the nonlinearity of the model. In addition, we expect that removing the assumption of fast dynamics is necessary to analyze the initial oscillatory behavior. Therefore, in this study, 
we aim to clarify the mechanisms of initial oscillation and the extremely slow growth by analyzing this model under more general conditions; i.e., without assuming the fast dynamics.

The summary of our results is as follows: in the case when $n=2$, numerical simulations suggest that the initial oscillation and the following slow growth of agents in one group (i.e., $v_i$ in Eq. \eqref{original_model}) appear if one agent has a considerably lower growth rate than the other. We perform the existence and linear stability analysis without assuming that the dynamics of agents in the other group (i.e., $x_i$ in Eq. \eqref{original_model}) are sufficiently fast. We determine that an oscillation occurs if the initial value of the slow-growing agent is sufficiently small. Next, we generalize these results for the case when $n \ge 3$ and one agent has a considerably lower growth rate $\eps$ than the others. In particular, we prove that (i) all growth rates synchronize to the same value of $O(\eps)$ if the parameters 
satisfy a few conditions and (ii) damped oscillation exists if the initial value of the slowest-growing mutant is sufficiently small. 

Our work is a theoretical study that reveals nontrivial and previously unfocused features of a nonlinear dynamical system of asymmetrically interacting agents. This study is also novel in that we perform the stability analysis of the dynamics that oscillate and diverge, compared to the previous works that analyze the stability of equilibrium solutions in mathematical models regarding population dynamics \cite{murase2005stability,iwami2006frequency,iwami2008mathematical,liu1997nonlinear}. 

\section{Mathematical model and nondimensionalization}
Our model is based on the one proposed in Ref \cite{nowak1992}. The dynamics of $2n$ agents, which are originally introduced as $n$ viral mutants and corresponding immune cells \cite{nowak1992}, are described by the following dynamical system: 
\begin{subequations}
\label{original_model}
\begin{align}
\dot{v_i} &= \frac{d v_i}{dt} = v_i(r_i - p_i x_i), \\
\dot{x_i} &= \frac{d x_i}{dt} = k v_i - u x_i \sum_{j = 1}^{n} v_j , \label{original_model_x}
\end{align}
\end{subequations}
where $v_i$ denotes the amount of mutant virus $i$, $x_i$ is the quantity of strain-specific immune cells attacking the virus $i$, and $n$ represents the number of viral mutant strains ($1 \leq i \leq n$). The parameter $r_i$ is the growth rate of virus $i$, $p_i$ represents the strength of the immune attack on virus $i$, $k$ is the activation rate of immune cells, and $u$ represents the strength of the viral attack on immune cells. Note the asymmetric interaction between the virus and immune cells; even though each strain of immune cells $x_i$ is specific to virus $i$, virus $i$ can attack all strains of immune cells. The parameters $r_i, p_i, k,$ and $u$ are assumed to be positive constants. 

We introduce dimensionless quantities $\alpha_i \coloneqq \frac{r_i}{r_1},\ \tilde v_i \coloneqq \frac{u}{r_1}v_i,\ \tilde x_i \coloneqq \frac{p_i}{r_i}x_i,\ \tau \coloneqq r_1 t,$ and $q_i \coloneqq \frac{p_i k}{r_i u}$. By renaming $\tilde v_i \to v_i,\ \tilde x_i \to x_i$, and $\tau \to t$, we transform Eq. \eqref{original_model} into the following dimensionless system: 
\begin{subequations}
\label{nondim_model}
\begin{align}
\dot{v_i} &= \alpha_i v_i(1- x_i), \\
\dot{x_i} &= v_i\left( q_i - x_i\frac{\sum_{j = 1}^{n} v_j}{v_i} \right). \label{nondim_model_x}
\end{align}
\end{subequations}
The parameter $\alpha_i$ is the ratio of growth rates among viral mutants and $q_i$ represents the immunological strength of $x_i$ compared with the virulence of $v_i$. We assume $0<\alpha_n \leq \cdots \leq \alpha_2 \leq \alpha_1 =1$ without loss of generality. 

\section{The case when $n=1$}
When there is only one viral mutant (i.e., $n=1$), our model is given by
\begin{subequations}
\label{n1_model}
\begin{align}
\dot{v} &= v(1- x), \\
\dot{x} &= v\left( q - x \right).
\end{align}
\end{subequations}
Figure \ref{fig:nume1} presents the simulation results, where we assume that the initial value of $x$ is zero (i.e., $x(0)=0$) because virus-specific immunity has not been prepared at the beginning of infection. Then, the viral dynamics are classified into the following two types: (i) for $q<1$, the viral load continues to increase (Figs. \ref{fig:nume1} ({\bf a}) and ({\bf b})), and (ii) for $q>1$, the viral load initially increases and subsequently decreases, eventually converging to zero (Figs. \ref{fig:nume1} ({\bf c}) and ({\bf d})). We conclude that the initial oscillation and the following slow viral growth observed in the numerical simulation of the previous study \cite{nowak1992} cannot be reproduced in the case of one viral mutant.
\begin{figure*}
\centering
\includegraphics[width=\linewidth]{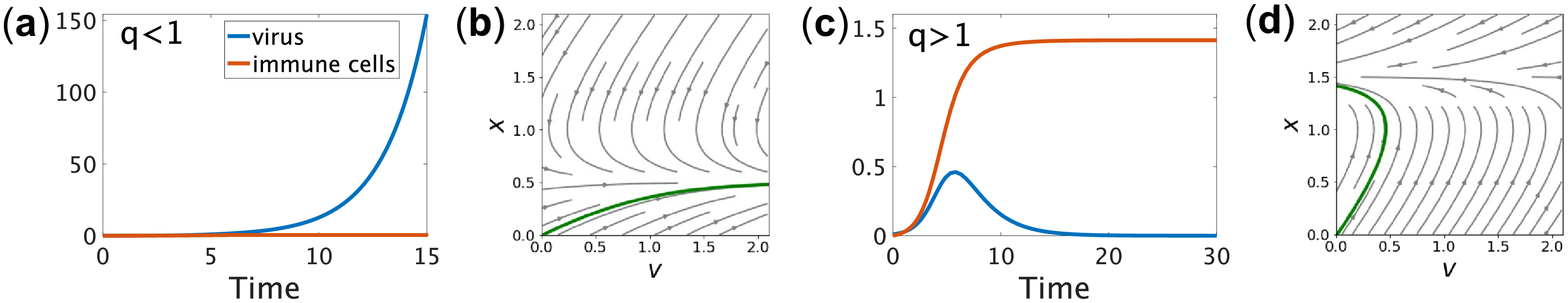}
\caption{The numerical simulations for one viral mutant case. Panels ({\bf a}) and ({\bf c}) represent the time course of the virus and immune cells, using the same initial conditions $v(0) = 0.01$ and $x(0) = 0$. Panels ({\bf b}) and ({\bf d}) display the phase planes where the green line in each phase plane is the trajectory that starts from $(v(0), x(0)) = (0.01, 0)$ . ({\bf a}, {\bf b}) When $q < 1$, the total viral load continues to increase (we set $q=0.5$). ({\bf c}, {\bf d}) When $q > 1$, the total viral load decreases after the initial peak and converges to zero (we set $q=1.5$). }
\label{fig:nume1}
\end{figure*}
\section{The case when $n=2$}
For $n=2$, our model is given as
\begin{subequations}
\label{n2_model}
\begin{align}
\dot v_1 &=  v_1(1 -  x_1), \label{n2_model_original_v1} \\
\dot v_2 &= \alpha_2 v_2(1 -  x_2), \label{n2_model_original_v2} \\
\dot x_1 &=  v_1\left[q_1 -  x_1\left(1+\frac{v_2}{v_1}\right)\right], \label{n2_model_x1}\\
\dot x_2 &=  v_2\left[q_2 -  x_2\left(1+\frac{v_1}{v_2}\right)\right]. \label{n2_model_x2}
\end{align}
\end{subequations}
\subsection{Simulation results}
Figure \ref{fig:n2} demonstrates the simulation results. In Figs. \ref{fig:n2} ({\bf a}) and ({\bf b}), we consider the situation in which two viral mutants have similar replication rates and the immunity is strong enough to eradicate the virus. Next, we weaken the immunity, or decrease $q_i$, so that the viral load diverges (Figs. \ref{fig:n2} ({\bf c}) and ({\bf d})). Finally, we notably reduce $\alpha_2$ to simulate a slow-replicating mutant (Figs. \ref{fig:n2} ({\bf e}) and ({\bf f})). Throughout these simulations, we use the same initial conditions ($v_1(0) = v_2(0) = 0.01$ and $x_1(0) = x_2(0) = 0$) because it is natural to assume that the amount of virus is very low and virus-specific immunity has not been established at the beginning of the infection. 
\begin{figure*}[ht]
\centering
\includegraphics[width=\linewidth]{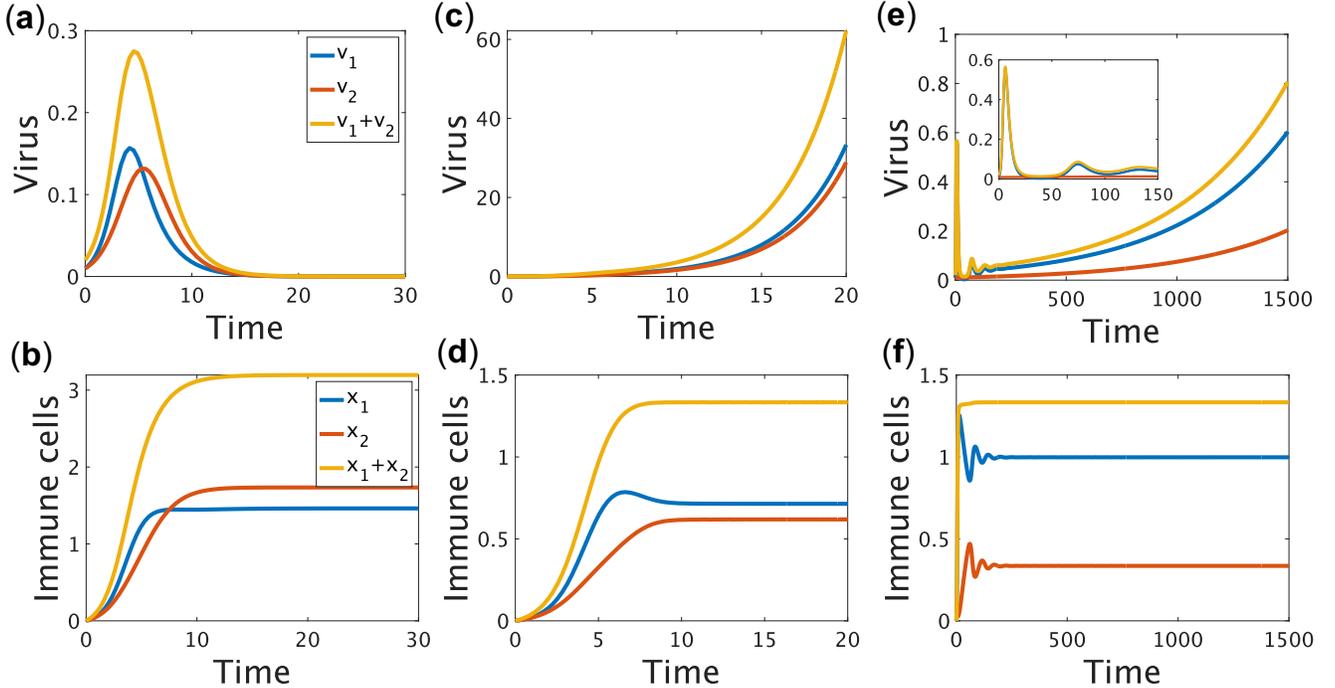}
\caption{Time course of the viral load ($v_1, v_2$) and immune cells ($x_1, x_2$) in various parameters. ({\bf a}, {\bf b}) Virus replicates transiently and is then eliminated from the body. ({\bf c}, {\bf d}) Virus continues to grow exponentially. ({\bf e}, {\bf f}) After initial proliferation, the viral load decreases to a low level. However, the viral load increases again and finally diverges. The parameters are as follows: $\alpha_2 = 0.75, q_1 = q_2 = 4$ for panels ({\bf a}) and ({\bf b}), $\alpha_2 = 0.75, q_1 = q_2 = \frac{4}{3}$ for panels ({\bf c}) and ({\bf d}), and $\alpha_2 = 0.003, q_1 = q_2 = \frac{4}{3}$ for panels ({\bf e}) and ({\bf f}) . We use the same initial conditions $v_1(0) = v_2(0) = 0.01$ and $x_1(0) = x_2(0) = 0$.}
\label{fig:n2}
\end{figure*}

The dynamics observed in Figs. \ref{fig:n2} ({\bf a}), ({\bf b}), ({\bf c}), and ({\bf d}) are qualitatively the same as those obtained in the previous section. In contrast, Figs. \ref{fig:n2} ({\bf e}) and ({\bf f}) demonstrate a new pattern with two components, namely, damped oscillation and slow exponential viral growth, which was previously observed but not analyzed \cite{nowak1992}. We are going to clarify the origin of this simulation result. 
\subsection{Analysis}
We expect that the dynamics in Figs. \ref{fig:n2} ({\bf e}) and ({\bf f}) may arise when $\alpha_2 \ll 1$. Thus, we treat $\alpha_2$ as a small parameter and put $\eps \coloneqq \alpha_2$. The other parameters are assumed to be $O(1)$. 

In Figs. \ref{fig:n2} ({\bf e}) and ({\bf f}), we observe that $x_i(t)$ converges toward a nonzero constant, denoted by $x_i^*$, whereas $v_i$ diverges. Based on Eq. \eqref{n2_model}, this condition is only possible when $\beta \coloneqq \frac{v_2}{v_1}$ also converges toward a positive constant, represented by $\beta^*$. Assuming the convergence of $\beta$ to $\beta^*$, we obtain the fixed point $x_i^*$ in Eqs. \eqref{n2_model_x1} and \eqref{n2_model_x2}, which is given as
\begin{gather}
    x_1^* = \frac{q_1}{1+\beta^*}, \quad {\rm and} \quad x_2^* = \frac{ q_2 \beta^*}{1+\beta^*}. \label{dif_x_i^*_n2}
\end{gather}
Substituting $x_i=x_i^*$ into $\dot \beta = \frac{v_1 \dot v_2 - v_2 \dot v_1}{v_1^2} = 0$, we further obtain
\begin{align}
    \beta^*=\frac{q_1-1+\eps}{1+\eps(q_2-1)}    = q_1-1 + O(\eps). \label{al_2}
\end{align}
For sufficiently small $\eps$, the condition $\beta^* > 0$ holds when $q_1 > 1$; thus, we assume this inequality below. Substituting $x_i = x_i^*$ and $\beta = \beta^*$ into Eqs. (\ref{n2_model_original_v1}) and (\ref{n2_model_original_v2}), we obtain 
\begin{align*}
 \dot v_1 =  \lambda v_1, \quad {\rm and} \quad \dot v_2 =  \lambda v_2,
\end{align*}
where 
\begin{align}
\lambda = \frac{\eps(q_1+q_2-q_1 q_2)}{q_1+\eps q_2} = O(\eps). \label{lam_2}
\end{align}
We also assume $q_1+q_2-q_1 q_2 > 0$ so that $\lambda > 0$. Therefore, if $x_i$ and $\beta$ sufficiently approach $x_i^*$ and $\beta^*$, respectively, $v_1$ and $v_2$ exponentially increase with the same time scale $\lambda^{-1}$ of $O(\eps^{-1})$. In other words, the effective replication rates of both viral mutants synchronize to the same value $\lambda$, which is as small as the slow-replicating mutant's replication rate $\eps$. 

We now perform the stability analysis of the obtained solution by invoking the notion of time-scale separation. For convenience, we introduce new variables $w_i(t)$ as $w_i \coloneqq v_i e^{-\lambda t}$. Then, Eq. \eqref{n2_model} is transformed to the following four-dimensional nonautonomous system:
\begin{subequations}
\label{n2_model2}
\begin{align}
 \dot w_1 &=  w_1(1 - \lambda- x_1), \\
 \dot w_2 &= \eps w_2(1 - \frac{\lambda}{\eps} -x_2), \label{n2_model2_w2}\\
\dot x_1 &=  e^{\lambda t} w_1\left[q_1 -  x_1\left(1+\frac{w_2}{w_1}\right)\right], \label{n2_model2_x1}\\
 \dot x_2 &=  e^{\lambda t} w_2\left[q_2 -  x_2\left(1+\frac{w_1}{w_2}\right)\right]. \label{n2_model2_x2}
\end{align}
\end{subequations}
As far as $t=O(1)$, we can safely replace $e^{\lambda t}$ in Eqs. (\ref{n2_model2_x1}) and (\ref{n2_model2_x2}) with $1$ because $e^{\lambda t} = 1 + O(\lambda t) = 1 + O(\eps)$. Moreover, $w_2$ is a slow variable. Thus, in a good approximation, the dynamics of $w_1, x_1$, and $x_2$ are described by the following three-dimensional autonomous subsystem: 
\begin{subequations}
\label{n2_model3}
\begin{align}
 \dot w_1 &=  w_1(1 - \lambda- x_1), \\
\dot x_1 &=  w_1\left[q_1 -  x_1\left(1+\frac{w_2}{w_1}\right)\right], \label{n2_model3_x1}\\
 \dot x_2 &=  w_2\left[q_2 -  x_2\left(1+\frac{w_1}{w_2}\right)\right], \label{n2_model3_x2}
\end{align}
\end{subequations}
in which $w_2$ is regarded as a constant. This subsystem has a nontrivial fixed point $(w_1,x_1,x_2)=(w_1^*,x_1^*,x_2^*)$, where $w_1^* = \frac{w_2}{\beta^*}$ and $x_i^*$ are given in Eq. \eqref{dif_x_i^*_n2}. The Jacobian matrix at this fixed point is 
\begin{equation}
\label{jacob_2}
\begin{pmatrix}
0 & -w_1^* & 0 \\
q_1 - x_1^* &  -(w_1^* + w_2) & 0 \\
-x_2^* &  0 & -(w_1^* + w_2)
\end{pmatrix}
,
\end{equation}
and its eigenvalues are
\begin{gather}
\label{eigen_2}
-\frac{(1+\beta^*)w_2}{\beta^*} \quad {\rm and} \quad \frac{-\frac{(1+\beta^*)w_2}{\beta^*} \pm \sqrt{D_0}}{2}, 
\end{gather}
where
\begin{equation}
\label{D_2}
D_0 = \frac{(1+\beta^*)^2 w_2^2}{(\beta^*)^2} - \frac{4 q_1 w_2}{1+\beta^*}.
\end{equation}
Regardless of the sign of $D_0$, all eigenvalues have negative real parts. Thus, the fixed point under consideration is asymptotically stable. We also determine that imaginary eigenvalues appear if $D_0<0$; i.e.,
\begin{equation}
0 < w_2 < \frac{4 q_1 (\beta^*)^2}{(1+\beta^*)^3} = \frac{4 q_1 (q_1 - 1 + \eps)^2 (1+\eps(q_2 -1))}{(q_1 + \eps q_2)^3}. \label{nece_2}
\end{equation}
Oscillation arises in this case. 

The fast variables stay in the $\eps$-vicinity of the fixed point in the full system after the transient process because the subsystem of the fast variables has a stable fixed point $(w_1,x_1,x_2)=(w_1^*,x_1^*,x_2^*)$. Substituting $x_2=x_2^*+O(\eps)$ into Eq. \eqref{n2_model2_w2} and further using Eqs. (\ref{al_2}) and (\ref{lam_2}), we obtain $\dot w_2= O(\eps^2)$, which implies that $w_2(t) = w_2(0)+O(\eps^2)$ for $t = O(1)$. Therefore, $w_2$ in inequality (\ref{nece_2}) can be regarded as $w_2(0)$ in a good approximation. Consequently, we conclude that oscillation inevitably occurs if there is a viral mutant whose replication rate is considerably smaller than the other's and its initial value is sufficiently small; i.e., 
\begin{equation}
v_2(0)=w_2(0)< \frac{4 q_1 (\beta^*)^2}{(1+\beta^*)^3}.
\end{equation}
Moreover, both viral mutants share an effective growth rate of $\lambda=O(\eps)$, namely, the slow mutant entrains the fast mutant. This synchronization underlies the emergence of the phase of low viral load.

By applying the same analysis for the case of three viral mutants (i.e., $n=3$), we also find that oscillatory viral dynamics and synchronized replication rates are observed if one viral mutant has a considerably lower replication rate than the others and its initial value is sufficiently small. See Appendix for more details about the three viral mutants case.

We generalize these results for $n$ mutants case in which one mutant has a considerably lower replication rate than the others. 
\section{General case}
We consider the system (\ref{nondim_model}) for the general case of $n$ mutants. Let $\Lambda$ be a real number. By introducing new variables $w_i \coloneqq v_i e^{-\Lambda t}$, we transform Eq. \eqref{nondim_model} into the following $2n$-dimensional nonautonomous system:
\begin{subequations}
\label{n_model}
\begin{align}
\dot{w_i} &= \alpha_i w_i(1- \frac{\Lambda}{\alpha_i} -x_i), \label{n_model_w}\\
\dot{x_i} &= e^{\Lambda t} w_i\left( q_i - x_i\frac{\sum_{l = 1}^{n} w_l}{w_i} \right),  \label{n_model_x}
\end{align}
\end{subequations}
for $1 \leq i \leq n$. As in the case when $n=2$, we are particularly concerned with the situation in which one of the agents $v_i$ has a considerably lower growth rate than the others. Thus, we treat $\alpha_n$ as a small parameter and put $\eps \coloneqq \alpha_n$. The other parameters are assumed to be $O(1)$. 
\begin{definition}
We define an internal fixed point as the fixed point whose coordinates are all positive. 
\end{definition}
First, we determine $\Lambda$ so that the system (\ref{n_model}) has an internal fixed point. 
\begin{theorem}
\label{theo_iff}
The system (\ref{n_model}) has at least one internal fixed point if and only if the following two conditions hold: 
\begin{equation}
 \Lambda = \frac{\left(\sum_{l=1}^{n}\frac{1}{q_l}\right)-1}{\sum_{l=1}^{n}\frac{1}{\alpha_l q_l}}, 
 \label{lam_n}
\end{equation}
\begin{equation}
\eps \sum_{l=1}^{n}  \frac{1}{\alpha_l q_l} > \left(\sum_{l=1}^{n}  \frac{1}{q_l}\right) -1.
\label{til_x_necessary_sup}
\end{equation}
\end{theorem}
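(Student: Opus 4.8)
The plan is to characterize internal fixed points directly from the vanishing of the right-hand sides of (\ref{n_model}) and then read off each of the two conditions as a consequence of existence together with positivity; the converse will follow by reversing the same steps. First I would observe that although the system (\ref{n_model}) is nonautonomous, the time dependence enters only through the strictly positive factor $e^{\Lambda t}$ in (\ref{n_model_x}); hence the set on which both right-hand sides vanish is independent of $t$, and the notion of a fixed point is well defined. At such a point, because an internal fixed point has every $w_i > 0$ and every $\alpha_i > 0$, equation (\ref{n_model_w}) forces $x_i = 1 - \Lambda/\alpha_i$, while cancelling the positive factor $e^{\Lambda t} w_i$ in (\ref{n_model_x}) forces $x_i = q_i w_i / S$, where $S \coloneqq \sum_{l=1}^{n} w_l > 0$.

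Next I would equate these two expressions for $x_i$ to obtain $w_i = (S/q_i)(1 - \Lambda/\alpha_i)$, sum over $i$, and divide by $S>0$. This yields the scalar consistency relation $\sum_{l} \frac{1}{q_l}(1 - \Lambda/\alpha_l) = 1$, and solving for $\Lambda$ gives precisely (\ref{lam_n}); since the denominator $\sum_l 1/(\alpha_l q_l)$ is strictly positive, $\Lambda$ is uniquely and finitely determined. Thus (\ref{lam_n}) is necessary for the existence of an internal fixed point.

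For the second condition I would observe that, with $S>0$, both $w_i>0$ and $x_i>0$ are equivalent to $1 - \Lambda/\alpha_i > 0$, i.e. $\Lambda < \alpha_i$ for every $i$. Because $\alpha_n = \eps$ is the smallest growth rate, this whole family of inequalities collapses to the single binding constraint $\Lambda < \eps$. Substituting (\ref{lam_n}) and clearing the (sign-definite) denominator converts $\Lambda < \eps$ into exactly (\ref{til_x_necessary_sup}), establishing necessity of the second condition. The converse is obtained by running the argument backwards: assuming (\ref{lam_n}) and (\ref{til_x_necessary_sup}), the value $x_i = 1 - \Lambda/\alpha_i$ is positive for every $i$, and choosing any $S>0$ and setting $w_i = (S/q_i)(1 - \Lambda/\alpha_i) > 0$ produces an internal fixed point, so at least one exists.

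I expect the only genuinely delicate point to be the positivity bookkeeping rather than any deep difficulty: confirming that $x_i>0$ and $w_i>0$ are the same condition, that the full collection $\{\Lambda < \alpha_i\}$ reduces to the single inequality at the smallest $\alpha_i$, and that this inequality is algebraically identical to (\ref{til_x_necessary_sup}) after multiplying through by the positive denominator. A secondary subtlety worth flagging is non-uniqueness: the fixed-point relations are homogeneous of degree one in the $w_i$, so the scale $S$ is free and one in fact obtains a one-parameter ray of internal fixed points, which is exactly why the statement asserts existence of \emph{at least one} rather than a unique such point.
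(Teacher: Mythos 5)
Your proof is correct, and it reaches both conditions by a genuinely more elementary route than the paper. Where you solve each stationarity equation for $w_i$ in terms of $S=\sum_{l}w_l$, namely $w_i=(S/q_i)(1-\Lambda/\alpha_i)$, and then sum and divide by $S>0$ to obtain the scalar relation $\sum_{l}\frac{1}{q_l}\left(1-\frac{\Lambda}{\alpha_l}\right)=1$, the paper instead casts the same equations as a homogeneous linear system with coefficient matrix $Q_n$ and extracts that relation from $\det Q_n=0$, computing the determinant by an inductive lemma. Your sufficiency direction is likewise more concrete: you exhibit the internal fixed points explicitly as the ray parametrized by $S>0$, with positivity immediate from $x_i=1-\Lambda/\alpha_i>0$, which for all $i$ collapses to the single inequality $\Lambda<\eps=\alpha_n$, i.e.\ \eqref{til_x_necessary_sup}; the paper instead takes an abstract nontrivial null vector of $Q_n$ and argues that all its components share a sign, flipping the sign if necessary. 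The one step you should make explicit when writing this up is the verification that your candidate really solves the fixed-point equations, i.e.\ that $\sum_l w_l$ equals the chosen $S$; this is exactly the consistency relation $\sum_l x_l/q_l=1$, which holds by \eqref{lam_n}, so the argument does close. What the paper's heavier linear-algebra route buys is reuse: the determinant formula of Lemma~\ref{Qn_calcu} is needed again for $\det R$ in the uniqueness theorem and for factoring the characteristic polynomial in Theorem~\ref{theo_J}, so within the paper it is not wasted effort. For Theorem~\ref{theo_iff} in isolation, your derivation is shorter and also makes the non-uniqueness (the one-parameter family of internal fixed points) transparent.
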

\begin{proof}[Proof of necessity]
Let $(w_i^{\dagger}, x_i^{\dagger})$ with $w_i^{\dagger}> 0$ and $x_i^{\dagger}>0$ be the internal fixed point of system (\ref{n_model}). Then, $(w_i^{\dagger}, x_i^{\dagger})$ satisfies
\begin{equation}
x_i^{\dagger} = 1-\frac{\Lambda}{\alpha_i}, \label{dif_til_w} 
\end{equation}
and
\begin{equation}
w_i^{\dagger} q_i - x_i^{\dagger} \sum_{l = 1}^{n} w_l^{\dagger} = 0. \label{til_x}
\end{equation}
We rewrite Eq. \eqref{til_x} as 
\begin{equation}
\label{coefficent_equation}
Q_n
\pmb{w^{\dagger}}
= \pmb{0}_n, 
\end{equation}
where $Q_n$ is the coefficient matrix given as 
\begin{equation}
\label{Qn_def}
Q_n \coloneqq 
\begin{pmatrix}
q_1 - x_1^{\dagger} & -x_1^{\dagger} & -x_1^{\dagger} & \cdots & -x_1^{\dagger} \\
-x_2^{\dagger} & q_2 - x_2^{\dagger} & -x_2^{\dagger} & \cdots & -x_2^{\dagger} \\
\vdots & \vdots & \vdots & & \vdots \\
-x_n^{\dagger} & -x_n^{\dagger} & -x_n^{\dagger} & \cdots & q_n - x_n^{\dagger} 
\end{pmatrix}
,
\end{equation}
$\pmb{w^{\dagger}}\coloneqq{}^{t}(w_1^{\dagger}, w_2^{\dagger}, \ldots, w_n^{\dagger})$, and $\pmb{0}_n$ denotes the zero vector of order $n$. Since we assume $\pmb{w^{\dagger}} \neq \pmb{0}_n$, Eq. \eqref{coefficent_equation} has a nontrivial solution; i.e., 
\begin{equation}
\label{Qn_0}
\det Q_n = 0.
\end{equation}
\begin{lemma}
\label{Qn_calcu}
The determinant of the matrix $Q_n$ given in Eq. \eqref{Qn_def} is calculated as
\begin{equation}
\label{Delta_n}
\det Q_n= \left(1- \sum_{l=1}^{n}  \frac{x_l^{\dagger}}{q_l} \right) \prod_{k=1}^{n} q_k. 
\end{equation}
\end{lemma}
\begin{proof}[Proof of Lemma. \ref{Qn_calcu}.]
We prove this by induction on $n$. Obviously, Eq. \eqref{Delta_n} holds when $n=2$. We assume that Eq. \eqref{Delta_n} is true when $n=m$. Then, 
\begin{widetext}
\begin{align*}
\det Q_{m+1} &= 
\begin{vmatrix}
q_1 - x_1^{\dagger} & -x_1^{\dagger} &  \cdots & -x_1^{\dagger} & -x_1^{\dagger} \\
-x_2^{\dagger} & q_2 - x_2^{\dagger} &  \cdots & -x_2^{\dagger} & -x_2^{\dagger} \\
\vdots & \vdots &  & \vdots & \vdots \\
-x_m^{\dagger} & -x_m^{\dagger} &  \cdots & q_m -x_m^{\dagger} & - x_m^{\dagger} \\
-x_{m+1}^{\dagger} & -x_{m+1}^{\dagger} &  \cdots & -x_{m+1}^{\dagger} & q_{m+1} - x_{m+1}^{\dagger} 
\end{vmatrix} \\
&=
\begin{vmatrix}
q_1 - x_1^{\dagger} & -x_1^{\dagger} &  \cdots & -x_1^{\dagger} & 0 \\
-x_2^{\dagger} & q_2 - x_2^{\dagger} &  \cdots & -x_2^{\dagger} & 0 \\
\vdots & \vdots &  & \vdots & \vdots \\
-x_m^{\dagger} & -x_m^{\dagger} &  \cdots & q_m -x_m^{\dagger} & 0 \\
-x_{m+1}^{\dagger} & -x_{m+1}^{\dagger} &  \cdots & -x_{m+1}^{\dagger} & q_{m+1} 
\end{vmatrix} 
+
\begin{vmatrix}
q_1 - x_1^{\dagger} & -x_1^{\dagger} &  \cdots & -x_1^{\dagger} & -x_1^{\dagger} \\
-x_2^{\dagger} & q_2 - x_2^{\dagger} &  \cdots & -x_2^{\dagger} & -x_2^{\dagger} \\
\vdots & \vdots &  & \vdots & \vdots \\
-x_m^{\dagger} & -x_m^{\dagger} &  \cdots & q_m -x_m^{\dagger} & - x_m^{\dagger} \\
-x_{m+1}^{\dagger} & -x_{m+1}^{\dagger} &  \cdots & -x_{m+1}^{\dagger} & - x_{m+1}^{\dagger} 
\end{vmatrix} \\
&= q_{m+1} \det Q_m + 
\begin{vmatrix}
q_1  & 0 & \cdots & 0 & -x_1^{\dagger} \\
 & q_2  & \cdots  & 0 & -x_2^{\dagger} \\
 &  & \ddots & \vdots & \vdots \\
 &  &   & q_m  & -x_m^{\dagger} \\
\text{\huge{0}} &  &   &  & - x_{m+1}^{\dagger} 
\end{vmatrix} \\
&= q_{m+1} \left(1- \sum_{l=1}^{m}  \frac{x_l^{\dagger}}{q_l} \right) \left(\prod_{k=1}^{m} q_k \right) - q_1 q_2 \cdots q_m x_{m+1}^{\dagger} = \left(1- \sum_{l=1}^{m+1}  \frac{x_l^{\dagger}}{q_l} \right) \prod_{k=1}^{m+1} q_k.
\end{align*}
\end{widetext}
\end{proof}
It thus follows from Eq. \eqref{Qn_0} and Lemma. \ref{Qn_calcu} that 
\begin{equation}
1- \sum_{l=1}^{n}  \frac{x_l^{\dagger}}{q_l} = 0. \label{lam_condition}
\end{equation}
By substituting Eq. \eqref{dif_til_w} into Eq. \eqref{lam_condition}, we obtain Eq. \eqref{lam_n}. Moreover, since we assume $x_n^{\dagger} = 1-\frac{\Lambda}{\eps}> 0$, the inequality (\ref{til_x_necessary_sup}) follows from Eq. \eqref{lam_n}. \\
{\it Proof of sufficiency.} We set $x_i^{\dagger} \coloneqq 1-\frac{\Lambda}{\alpha_i}$. Since $\det Q_n = 0$ follows from Eq. \eqref{lam_n}, Eq. \eqref {coefficent_equation} has a nontrivial solution, denoted by $\pmb{w^{\dagger}} \coloneqq {}^{t}(w_1^{\dagger}, w_2^{\dagger},\ldots,w_n^{\dagger}) \neq \pmb{0}_n$. Note that $w_i^{\dagger}$ satisfy Eq. \eqref{til_x}. Recall the inequality $\alpha_1 \geq \alpha_2 \geq \cdots \geq \alpha_{n-1} \gg \alpha_n$; we have $x_1^{\dagger} \geq x_2^{\dagger} \geq \cdots \geq x_{n-1}^{\dagger} > x_n^{\dagger}$ from $x_i^{\dagger} \coloneqq 1-\frac{\Lambda}{\alpha_i}$. Hence, the inequality (\ref{til_x_necessary_sup}), which is equivalent to $x_n^{\dagger}>0$, implies that $x_i^{\dagger} > 0$ for all $i$. It thus follows from Eq. \eqref{til_x} and $q_i>0$ that each $w_i^{\dagger}$ has the same sign with $\sum_{l = 1}^{n} w_l^{\dagger}$. Then, all $w_i^{\dagger}$ have the same sign, i.e, $w_i^{\dagger} > 0$ for all $i$ or $w_i^{\dagger} < 0$ for all $i$. If $w_i^{\dagger} < 0$ for all $i$, then $(x_i^{\dagger}, -w_i^{\dagger})$ is an internal fixed point of system (\ref{n_model}) because $-\pmb{w^{\dagger}} = {}^{t}(-w_1^{\dagger}, -w_2^{\dagger},\ldots,-w_n^{\dagger})$ is also a nontrivial solution of Eq. \eqref {coefficent_equation}. Hence, the system (\ref{n_model}) has at least one internal fixed point.

Thus, we have shown Theorem. \ref{theo_iff}.
\end{proof}

In the following discussion, we set $\Lambda$ as Eq. \eqref{lam_n} and assume the inequality (\ref{til_x_necessary_sup}). We also assume 
\begin{equation}
 \sum_{l=1}^{n}\frac{1}{q_l} > 1, \label{sum_n}
\end{equation}
so that $\Lambda > 0$. It follows from Eq. \eqref{lam_n} and the assumption $\eps \ll 1$ that
\begin{align}
\Lambda = \frac{\eps \left[\left(\sum_{l=1}^{n}\frac{1}{q_l}\right)-1\right]}{\frac{1}{q_n} + \eps \sum_{l=1}^{n-1}\frac{1}{\alpha_l q_l}} 
= O(\eps). \label{assump_lam_n}
\end{align}

Next, we perform the stability analysis assuming time-scale separation. As far as $t = O(1)$, we can safely replace $e^{\Lambda t}$ in Eq. \eqref{n_model_x} with $1$ because $e^{\Lambda t} = 1 + O(\Lambda t) = 1 + O(\eps)$. We also see that $w_n$ is a slow variable and $w_1, \ldots, w_{n-1}, x_1, \ldots, x_{n-1}$, and $x_n$ are fast variables. Thus, in a good approximation, the dynamics of these fast variables are described by the following $(2n-1)$-dimensional autonomous subsystem: 
\begin{subequations}
\label{n_model_2}
\begin{align}
\dot{w_j} &= \alpha_j w_j(1- \frac{\Lambda}{\alpha_j} -x_j), \label{n_model_2_w}\\
\dot{x_i} &= w_i\left( q_i - x_i\frac{\sum_{l = 1}^{n} w_l}{w_i} \right),  \label{n_model_2_x}
\end{align}
\end{subequations}
for $1\leq j \leq n-1$ and $1\leq i \leq n$. Note that $w_n$ is regarded as a constant. 

\begin{theorem}
The subsystem (\ref{n_model_2}) has the unique internal fixed point $(w_j^*, x_i^*)$ whose coordinates are given as
\begin{equation}
\label{dif_x_i^*}
x_i^* = 1- \frac{\Lambda}{\alpha_i}, 
\end{equation}
for $1\leq i \leq n$ and
\begin{equation}
\label{dif_w_j^*}
\begin{pmatrix}
w_1^* \\
w_2^* \\
\vdots \\
w_{n-1}^*
\end{pmatrix}
 = w_n R^{-1} 
\begin{pmatrix}
x_1^* \\
x_2^* \\
\vdots \\
x_{n-1}^*
\end{pmatrix}
, 
\end{equation}
where $R$ is a regular matrix written as 
\begin{equation}
\label{C_dif}
R \coloneqq 
\begin{pmatrix}
q_1 - x_1^* & -x_1^* & \cdots & -x_1^* \\
-x_2^* & q_2 -x_2^* & \cdots & -x_2^* \\
\vdots & \vdots & & \vdots \\
-x_{n-1}^* & -x_{n-1}^* & \cdots & q_{n-1} -x_{n-1}^* 
\end{pmatrix}
.
\end{equation}
\end{theorem}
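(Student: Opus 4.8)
The plan is to read the fixed-point conditions of (\ref{n_model_2}) off directly. At an internal fixed point every $w_j$ and every $w_i$ is positive, so $\dot w_j=0$ forces $1-\frac{\Lambda}{\alpha_j}-x_j=0$, while $\dot x_i=0$ forces $q_i w_i-x_i\sum_{l=1}^{n}w_l=0$. The first relation gives $x_j^*=1-\frac{\Lambda}{\alpha_j}$ for $1\le j\le n-1$, which is (\ref{dif_x_i^*}) for those indices. Inserting these values into the equations $q_i w_i-x_i^*\sum_{l=1}^{n}w_l=0$ with $i\le n-1$ and moving the $w_n$ term to the right produces the linear system $R\,{}^{t}(w_1,\dots,w_{n-1})=w_n\,{}^{t}(x_1^*,\dots,x_{n-1}^*)$, with $R$ exactly the matrix in (\ref{C_dif}); the equation with index $i=n$ is set aside to pin down $x_n$ afterwards.

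First I would prove that $R$ is regular. It has precisely the structure of $Q_{n-1}$ from (\ref{Qn_def}) with each $x_i^{\dagger}$ replaced by $x_i^*=1-\frac{\Lambda}{\alpha_i}$, so Lemma \ref{Qn_calcu} applied at size $n-1$ gives $\det R=\left(1-\sum_{l=1}^{n-1}\frac{x_l^*}{q_l}\right)\prod_{k=1}^{n-1}q_k$. Because $\Lambda$ is fixed by (\ref{lam_n}), the identity (\ref{lam_condition}) holds for the full list $x_1^*,\dots,x_{n-1}^*,\,1-\frac{\Lambda}{\eps}$, and rearranging it yields $1-\sum_{l=1}^{n-1}\frac{x_l^*}{q_l}=\frac{1}{q_n}\left(1-\frac{\Lambda}{\eps}\right)$. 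This is strictly positive by the inequality (\ref{til_x_necessary_sup}) (equivalently $1-\frac{\Lambda}{\eps}>0$) together with $q_n>0$, hence $\det R\neq0$. Consequently $R^{-1}$ exists and the system has the unique solution ${}^{t}(w_1^*,\dots,w_{n-1}^*)=w_nR^{-1}\,{}^{t}(x_1^*,\dots,x_{n-1}^*)$, which is (\ref{dif_w_j^*}).

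The step I expect to be the crux is confirming that the remaining coordinate $x_n^*$ again obeys (\ref{dif_x_i^*}), since in the subsystem $x_n$ is not pinned by any $\dot w_n=0$ condition but only by $\dot x_n=0$, so the match is not automatic and must come from the particular value of $\Lambda$. Writing $S\coloneqq\sum_{l=1}^{n}w_l$, the equations with $i\le n-1$ read $w_i^*=\frac{x_i^*}{q_i}S$; summing them, using $\sum_{i=1}^{n-1}w_i^*=S-w_n$ and the identity above, gives $S=\frac{q_n w_n}{1-\Lambda/\eps}$. Substituting this into $q_n w_n=x_n^* S$ and cancelling the positive factor $q_n w_n$ forces $x_n^*=1-\frac{\Lambda}{\eps}$, completing (\ref{dif_x_i^*}).

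Finally I would settle internality and uniqueness. Since $\Lambda>0$ by (\ref{sum_n}) and $\alpha_i\ge\eps$, each $x_i^*=1-\frac{\Lambda}{\alpha_i}\ge 1-\frac{\Lambda}{\eps}>0$; with $w_n>0$ this makes $S>0$ and therefore $w_j^*=\frac{x_j^*}{q_j}S>0$, so all coordinates are positive and the fixed point is internal. Uniqueness is then immediate: the $x_i^*$ are forced by the fixed-point equations, and $\pmb{w}^*$ is the unique solution of the invertible linear system, so no second internal fixed point can occur.
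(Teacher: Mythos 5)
Your proof is correct and follows essentially the same route as the paper's: extract $x_j^*$ from $\dot w_j=0$, invert the linear system $R\pmb{w}^*=w_n\pmb{x}^*$ after showing $\det R>0$ via Lemma~\ref{Qn_calcu} and inequality~(\ref{til_x_necessary_sup}), recover $x_n^*=1-\Lambda/\eps$ from the $\dot x_n=0$ equation using the sum $\sum_l w_l^*$, and verify positivity of all coordinates. Your identity $1-\sum_{l=1}^{n-1}x_l^*/q_l=\frac{1}{q_n}\left(1-\frac{\Lambda}{\eps}\right)$ is just a cleaner rewriting of the paper's expression~(\ref{sumj_n-1}), so there is no substantive difference.
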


\begin{proof}
Let us assume the existence of an internal fixed point $(w_j^*, x_i^*)$ of the subsystem (\ref{n_model_2}). Then, 
\begin{equation}
\label{x_i_dif_sup}
x_j^* = 1- \frac{\Lambda}{\alpha_j},
\end{equation}
\begin{equation}
\label{wj_dif}
w_j^* q_j - x_j^* \left(w_n + \sum_{l = 1}^{n-1} w_l^*\right) = 0,
\end{equation}
for $1\leq j \leq n-1$ and 
\begin{equation}
\label{wn_dif}
w_n q_n - x_n^* \left(w_n + \sum_{l = 1}^{n-1} w_l^*\right) = 0.
\end{equation}
Eq. \eqref{wj_dif} can be rewritten as
\begin{equation}
\label{coefficent_equation_2}
R
\begin{pmatrix}
w_1^* \\
w_2^* \\
\vdots \\
w_{n-1}^*
\end{pmatrix}
= w_n
\begin{pmatrix}
x_1^* \\
x_2^* \\
\vdots \\
x_{n-1}^*
\end{pmatrix}
,
\end{equation}
where $R$ is given in Eq. \eqref{C_dif}. According to Lemma \ref{Qn_calcu}, 
\begin{equation}
\label{detR}
\det R
= \left(1- \sum_{l=1}^{n-1}  \frac{x_l^*}{q_l} \right) \prod_{k=1}^{n-1} q_k.
\end{equation}
Note that 
\begin{equation}
\label{sumj_n-1}
1- \sum_{l=1}^{n-1}  \frac{x_l^*}{q_l}
= \frac{1}{\eps q_n \sum_{l=1}^{n}\frac{1}{\alpha_l q_l}} \left[ \eps \sum_{l=1}^{n}\frac{1}{\alpha_l q_l} - \left( \sum_{l=1}^{n}\frac{1}{q_l} \right) + 1 \right] >0
\end{equation}
follows from Eqs. \eqref{lam_n}, \eqref{x_i_dif_sup}, and inequality \eqref{til_x_necessary_sup}. 
Thus, $R$ is a regular matrix, which implies that Eq. \eqref{coefficent_equation_2} can be solved as 
\begin{equation}
\label{coefficent_equation_3}
\begin{pmatrix}
w_1^* \\
w_2^* \\
\vdots \\
w_{n-1}^*
\end{pmatrix}
= w_n R^{-1}
\begin{pmatrix}
x_1^* \\
x_2^* \\
\vdots \\
x_{n-1}^*
\end{pmatrix}
.
\end{equation}
By dividing both sides of Eq. \eqref{wj_dif} by $q_j$ and summing from $j=1$ to $n-1$, we obtain
\begin{align}
\sum_{l = 1}^{n-1} w_l^*  = \frac{w_n \sum_{j=1}^{n-1} \frac{x_j^*}{q_j}}{1- \sum_{j=1}^{n-1} \frac{x_j^*}{q_j}}. \label{sum_wj_sup}
\end{align}
Substituting Eq. \eqref{sum_wj_sup} into Eq. \eqref{wn_dif}, we have 
\begin{equation}
x_n^* = q_n\left( 1- \sum_{j=1}^{n-1} \frac{x_j^*}{q_j} \right) = 1-\frac{\Lambda}{\alpha_n}.
\end{equation}
Obviously, 
\begin{equation}
\label{x_i_positive}
x_1^* \geq x_2^* \geq \cdots \geq x_{n-1}^* > x_n^*>0, 
\end{equation}
from $\alpha_1 \geq \alpha_2 \geq \cdots \geq \alpha_{n-1} \gg \alpha_n$ and inequality (\ref{sumj_n-1}). It also follows from inequalities \eqref{sumj_n-1}, \eqref{x_i_positive}, and Eq. \eqref{sum_wj_sup} that $\sum_{l = 1}^{n-1} w_l^* > 0$, which implies that $w_j^* > 0$ for $1 \leq j \leq n-1$ from Eq. \eqref{wj_dif}. Therefore, we have shown that the point $(w_j^*, x_i^*)$ given by Eqs. \eqref{dif_x_i^*} and \eqref{dif_w_j^*} is the unique internal fixed point of the subsystem (\ref{n_model_2}). 
\end{proof}
\begin{remark}
Obviously, $x_i^*$ and $\frac{w_j^*}{w_n}$ are independent of $w_n$. 
\end{remark}
Let $J$ be the Jacobian matrix at the fixed point ($w_j^*,x_i^*$) of system (\ref{n_model_2}). Then, 
\begin{equation}
J =
\begin{pmatrix}
O_{n-1,n-1} & A & \pmb{0}_{n-1} \\
R & B & \pmb{0}_{n-1} \\
^{t} \pmb{x_n^*} & ^{t} \pmb{0}_{n-1} & -b w_n
\end{pmatrix}
,
\end{equation}
where 
\begin{gather*}
A \coloneqq -w_n {\rm diag} (a_1, a_2, \ldots, a_{n-1}), \quad B \coloneqq -b w_n I_{n-1}, \\
^{t} \pmb{x_n^*} \coloneqq -
\begin{pmatrix}
x_n^* & x_n^* & \cdots & x_n^*
\end{pmatrix}
, \\
a_j \coloneqq \frac{\alpha_j w_j^*}{w_n}, b \coloneqq \frac{(\sum_{l = 1}^{n-1} w_l^*) + w_n}{w_n},
\end{gather*}
$O_{m,n}$ denotes the $m \times n$ zero matrix, and $I_{n-1}$ is the identity matrix of order $n-1$. Note that $a_j$ and $b$ do not depend on $w_n$. Our purpose is to show that (i) the fixed point $(w_j^*,x_i^*)$ is asymptotically stable and (ii) damped oscillation occurs if $w_n$ is sufficiently small. In other words, we are going to prove the next theorem: 
\begin{theorem}
\label{theo_J}
(1) The real parts of the eigenvalues of the Jacobian matrix $J$ are all negative. \\
(2) There exists a positive constant $W$ such that $J$ has at least one imaginary eigenvalue if and only if $0 < w_n < W$. 
\end{theorem}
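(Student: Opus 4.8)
The plan is to exploit the block structure of $J$ to peel off one eigenvalue and then reduce the remaining spectrum to that of a single $(n-1)\times(n-1)$ matrix that can be symmetrized. First I would observe that the last column of $J$ vanishes except for the diagonal entry $-bw_n$. A cofactor expansion of $\det(J-\mu I_{2n-1})$ along that column therefore gives
\[
\det(J-\mu I_{2n-1}) = -(\mu+bw_n)\,\det(M-\mu I_{2n-2}),\qquad M\coloneqq\begin{pmatrix} O_{n-1,n-1} & A\\ R & B\end{pmatrix}.
\]
This already yields the real negative eigenvalue $\mu=-bw_n$ (recall $b>0$, $w_n>0$) and isolates the problem to the $2(n-1)$ eigenvalues of $M$.

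Next, since $B-\mu I_{n-1}=-(\mu+bw_n)I_{n-1}$ is a scalar matrix it commutes with $R$, so Silvester's block-determinant identity gives $\det(M-\mu I)=\det\!\big[\mu(\mu+bw_n)I_{n-1}-AR\big]$. Writing $A=-w_n D_a$ with $D_a\coloneqq\mathrm{diag}(a_1,\dots,a_{n-1})$, this says that $\mu$ is an eigenvalue of $M$ exactly when $\mu(\mu+bw_n)$ is an eigenvalue of $AR=-w_n D_a R$. Because $D_a R$ and $RD_a$ share their spectrum, I would denote the eigenvalues of $RD_a$ by $\rho_1,\dots,\rho_{n-1}$, so that every eigenvalue of $M$ is a root of a quadratic $\mu^2+bw_n\mu+w_n\rho_k=0$ for some $k$; the $2(n-1)$ roots of these $n-1$ quadratics exhaust the spectrum of $M$. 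Note that, by the Remark, $a_j$ and $b$ are independent of $w_n$, and so are the $\rho_k$.

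The crux is to show that each $\rho_k$ is real and positive. Using $R=D_q-\pmb{x}^*\,{}^t\pmb 1$ with $D_q\coloneqq\mathrm{diag}(q_1,\dots,q_{n-1})$, $\pmb{x}^*\coloneqq{}^t(x_1^*,\dots,x_{n-1}^*)$, and $\pmb 1\coloneqq{}^t(1,\dots,1)$, one finds $RD_a=D_qD_a-\pmb{x}^*\,{}^t\pmb a$ with $\pmb a\coloneqq{}^t(a_1,\dots,a_{n-1})$, a positive diagonal matrix minus a rank-one term. Conjugating by $S\coloneqq\mathrm{diag}\big(\sqrt{x_j^*/a_j}\big)$ turns this into the symmetric matrix $S^{-1}(RD_a)S=D_qD_a-\pmb y\,{}^t\pmb y$ with $y_j\coloneqq\sqrt{x_j^* a_j}$; hence all $\rho_k$ are real. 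Since $D_qD_a=\mathrm{diag}(q_j a_j)\succ 0$, the rank-one update $D_qD_a-\pmb y\,{}^t\pmb y$ is positive definite iff ${}^t\pmb y\,(D_qD_a)^{-1}\pmb y<1$, and this quantity equals $\sum_{j=1}^{n-1} x_j^*/q_j$, which is $<1$ by inequality \eqref{sumj_n-1}. Therefore every $\rho_k>0$, and each quadratic $\mu^2+bw_n\mu+w_n\rho_k=0$ has roots of sum $-bw_n<0$ and product $w_n\rho_k>0$, so both roots have negative real part. Together with $\mu=-bw_n$ this establishes claim (1).

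Finally, for (2) I would read off from $\mu^2+bw_n\mu+w_n\rho_k=0$ that the pair associated with $\rho_k$ is non-real precisely when its discriminant $b^2w_n^2-4w_n\rho_k$ is negative, i.e. $w_n<4\rho_k/b^2$. As the decoupled eigenvalue $-bw_n$ is always real, $J$ has a non-real eigenvalue iff this holds for some $k$, i.e. iff $0<w_n<W$ with $W\coloneqq 4\max_k\rho_k/b^2$; since $b$ and the $\rho_k$ do not depend on $w_n$, $W$ is a genuine positive constant. As a consistency check, for $n=2$ this reduces to $\rho=q_1/(1+\beta^*)$ and recovers the threshold of inequality \eqref{nece_2}. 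I expect the main obstacle to be the third step: recognizing that the non-symmetric matrix $RD_a$ is symmetrizable by the diagonal congruence $S$, and that the resulting positive-definiteness condition collapses exactly to the previously established inequality $\sum_j x_j^*/q_j<1$.
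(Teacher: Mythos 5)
Your proof is correct, and its skeleton coincides with the paper's: both peel off the decoupled real eigenvalue $-bw_n$, apply the commuting-block determinant identity \eqref{linear}, and reduce the remaining spectrum to $n-1$ quadratics $z^2+bw_nz+\rho_k w_n$ whose coefficients are independent of $w_n$, so that the threshold is $W=4\max_k\rho_k/b^2$. Where you genuinely diverge is in the key lemma that the $\rho_k$ are positive reals. The paper evaluates the block determinant explicitly via Lemma \ref{Qn_calcu}, obtains the polynomial $g(X)$ of Eq.\ \eqref{dif_gX} in the variable $X=(z^2+bw_nz)/w_n$, and proves Lemma \ref{negativity} by a sign-alternation/intermediate-value argument on the partial-fraction form of $g$, with some bookkeeping for repeated poles $\delta_i$. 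You instead stay at the matrix level: the $\rho_k$ are the eigenvalues of $RD_a$, which you symmetrize by the diagonal similarity $S=\mathrm{diag}(\sqrt{x_j^*/a_j})$ (legitimate since $x_j^*,a_j>0$) into $D_qD_a-\pmb{y}\,{}^t\pmb{y}$, so reality is immediate and positivity reduces to the rank-one-downdate criterion ${}^t\pmb{y}(D_qD_a)^{-1}\pmb{y}=\sum_j x_j^*/q_j<1$, which is exactly inequality \eqref{sumj_n-1} --- the same inequality the paper uses to get $g(0)>0$. Your route avoids the explicit determinant computation and the multiplicity bookkeeping, and identifies the paper's $\xi_{n-1}$ in \eqref{nece_n} cleanly as $\lambda_{\max}(RD_a)$; the paper's route is more elementary (only the intermediate value theorem) and hands you the explicit polynomial whose roots are the $\xi_k$. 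One pedantic remark: identity \eqref{linear} applied to your block matrix yields $\det\bigl(\mu(\mu+bw_n)I-RA\bigr)$ rather than $-AR$, but since $A$ and $R$ are square of equal size, $AR$ and $RA$ share their characteristic polynomial, so your passage to the spectrum of $RD_a$ is unaffected; your $n=2$ consistency check against \eqref{nece_2} is also correct.
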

\begin{proof}
Let $f(z) \coloneqq \det (zI_{2n-1} - J)$ be the characteristic polynomial of $J$. Then, 
\begin{equation}
\label{dif_fz_det}
f(z) = (z+b w_n)
\det 
\begin{pmatrix}
z I_{n-1} & -A \\
-R & (z+b w_n) I_{n-1}
\end{pmatrix}
.
\end{equation}
Recall the next formula \cite{Serre2010}; if $S,T,U,V$ are square matrices and if $SU=US$, then
\begin{equation}
\det 
\begin{pmatrix}
S & T \\
U & V
\end{pmatrix}
= \det (SV-UT).
\label{linear}
\end{equation}
We have
\begin{widetext}
\begin{align}
& \det 
\begin{pmatrix}
z I_{n-1} & -A \\
-R & (z+b w_n) I_{n-1}
\end{pmatrix} \notag \\
= & \left| z(z+bw_n)I_{n-1} -
w_n
\begin{pmatrix}
a_1(x_1^* - q_1) & a_2 x_1^* &  \cdots & a_{n-2} x_1^* & a_{n-1} x_1^* \\
a_1x_2^* & a_2 (x_2^* - q_2) &  \cdots & a_{n-2} x_2^* & a_{n-1} x_2^* \\
\vdots & \vdots &  & \vdots & \vdots \\
a_1 x_{n-2}^* & a_2 x_{n-2}^* &  \cdots & a_{n-2} (x_{n-2}^* - q_{n-2}) & a_{n-1} x_{n-2}^* \\
a_1 x_{n-1}^* & a_2 x_{n-1}^* &  \cdots & a_{n-2} x_{n-1}^* & a_{n-1} (x_{n-1}^* - q_{n-1})
\end{pmatrix} 
\right| \notag \\
= & 
{\small
\begin{vmatrix}
z^2+bw_n z+w_n a_1(q_1 - x_1^*) & -w_n a_2 x_1^* &  \cdots & -w_n a_{n-2} x_1^* & -w_n a_{n-1} x_1^* \\
-w_n a_1x_2^* & z^2+bw_n z+w_n a_2 (q_2 - x_2^*) &  \cdots & -w_n a_{n-2} x_2^* & -w_n a_{n-1} x_2^* \\
\vdots & \vdots &  & \vdots & \vdots \\
-w_n a_1 x_{n-2}^* & -w_n a_2 x_{n-2}^* &  \cdots & z^2+bw_n z+w_n a_{n-2} (q_{n-2} - x_{n-2}^*) & -w_n a_{n-1} x_{n-2}^* \\
-w_n a_1 x_{n-1}^* & -w_n a_2 x_{n-1}^* &  \cdots & -w_n a_{n-2} x_{n-1}^* & z^2+bw_n z+w_n a_{n-1} (q_{n-1} - x_{n-1}^*)
\end{vmatrix} }
\notag \\
\notag
= &
\left(\prod_{k=1}^{n-1}a_{k}\right)
\begin{vmatrix}
\frac{z^2+bw_n z}{a_1}+w_n q_1 - w_n x_1^* & -w_n x_1^* &  \cdots & -w_n x_1^* & -w_n x_1^* \\
-w_n x_2^* & \frac{z^2+bw_n z}{a_2} + w_n q_2 - w_n x_2^* &  \cdots & -w_n x_2^* & -w_n x_2^* \\
\vdots & \vdots &  & \vdots & \vdots \\
-w_n x_{n-2}^* & -w_n x_{n-2}^* &  \cdots & \frac{z^2+bw_n z}{a_{n-2}}+w_n q_{n-2} - w_n x_{n-2}^* & -w_n x_{n-2}^* \\
-w_n x_{n-1}^* & -w_n x_{n-1}^* &  \cdots & -w_n x_{n-1}^* & \frac{z^2+bw_n z}{a_{n-1}}+ w_n q_{n-1} - w_n x_{n-1}^*
\end{vmatrix} 
.
\end{align}
\end{widetext}
By replacing $q_k \to (\frac{z^2+bw_n z}{a_k}+w_n q_k)$ and $x_k^{\dagger} \to w_n x_k^*$ in Lemma. \ref{Qn_calcu}, we see that 
\begin{widetext}
\begin{align*}
\det 
\begin{pmatrix}
z I_{n-1} & -A \\
-R & (z+b w_n) I_{n-1}
\end{pmatrix}
 &= \left(\prod_{k=1}^{n-1}a_{k}\right) \left(1- \sum_{l=1}^{n-1}  \frac{w_n x_l^*}{\frac{z^2+bw_n z}{a_l}+w_n q_l} \right) \prod_{k=1}^{n-1} \left(\frac{z^2+bw_n z}{a_k}+w_n q_k\right) \\
&= \left(1- \sum_{l=1}^{n-1}  \frac{a_l x_l^* w_n}{z^2+bw_n z+ a_l q_l w_n} \right)\prod_{k=1}^{n-1} \left(z^2+bw_n z +a_k q_k w_n \right).
\end{align*}
\end{widetext}
This implies that
\begin{multline}
\label{fz_supple}
f(z) = (z+bw_n) \left(1- \sum_{l=1}^{n-1}  \frac{a_l x_l^* w_n}{z^2+bw_n z+ a_l q_l w_n} \right) \\
\times \prod_{k=1}^{n-1} \left(z^2+bw_n z + a_k q_k w_n \right).
\end{multline}
We introduce a new variable $X$ and function $g(X)$ given as 
\begin{equation}
X \coloneqq \frac{z^2+bw_n z}{w_n}, \label{dif_X}
\end{equation}
and
\begin{equation}
g(X) \coloneqq \left(1- \sum_{l=1}^{n-1}  \frac{a_l x_l^*}{X+ a_l q_l} \right)\prod_{k=1}^{n-1} \left(X +a_k q_k\right). \label{dif_gX}
\end{equation}
It thus follows from Eq. \eqref{fz_supple} that 
\begin{equation}
\label{fz_gX}
f(z) = (w_n)^{n-1} (z+bw_n) g(X).
\end{equation} 
Note that $g(X)$ is a polynomial of degree $n-1$. 
\begin{lemma}
\label{negativity}
The solutions of $g(X) = 0$ are all negative real numbers.
\end{lemma}
\begin{proof}[Proof of Lemma. \ref{negativity}.]
We rewrite $g(X)$ as
\begin{equation}
\label{gX_rewrite}
g(X) = \left(1- \sum_{j=1}^{m}  \frac{\eta_j}{X + \delta_j} \right)\prod_{i=1}^{m} \left(X+\delta_i \right)^{\theta_i},
\end{equation}
where $\delta_i$, $\eta_i$, and $\theta_i$ satisfy the following conditions for $1 \leq i \leq m$: 
\begin{gather*}
\prod_{k=1}^{n-1} \left(X+a_k q_k \right) = \prod_{i=1}^{m} \left(X+\delta_i \right)^{\theta_i}, \\
\delta_k < \delta_l \quad \text{if} \quad k<l, 
\end{gather*}
and 
\begin{gather*}
\sum_{l=1}^{n-1}  \frac{a_l x_l^*}{X + a_l q_l} = \sum_{j=1}^{m} \frac{\eta_j}{X + \delta_j}. 
\end{gather*}
Note that $\delta_i > 0$, $\eta_i>0$, $\theta_i \in \N$, and $\sum_{i=1}^{m} \theta_i = n-1$. We introduce a new function $h(X)$, which is a polynomial of degree $m$, as
\begin{equation*}
h(X) \coloneqq \left(1- \sum_{j=1}^{m}  \frac{\eta_j}{X + \delta_j} \right)\prod_{i=1}^{m} \left(X+\delta_i \right).
\end{equation*}
It follows from Eq. \eqref{gX_rewrite} that  
\begin{equation}
\label{gX_hX}
g(X) = h(X) \prod_{i=1}^{m} \left(X+\delta_i \right)^{\theta_i - 1}.
\end{equation}
By using Eq. (\ref{dif_gX}) and the inequality (\ref{sumj_n-1}), we get
\begin{align*}
g(0) = \left(1- \sum_{l=1}^{n-1}  \frac{x_l^*}{q_l} \right)\prod_{k=1}^{n-1} \left(a_k q_k\right) > 0.
\end{align*}
Thus, $h(0)>0$ follows from Eq. \eqref{gX_hX}. We also have 
\begin{align*}
h(-\delta_i) = \left. -\eta_i \frac{\prod_{i=1}^{m} \left(X+\delta_i \right)}{X+\delta_i} \right|_{X=-\delta_i}.
\end{align*}
It follows from $\eta_i > 0$ that 
\begin{equation*}
\text{sgn}\ h(-\delta_i) = (-1)^{i}. 
\end{equation*}
We put $\delta_0 \coloneqq 0$. Then, by the intermediate value theorem, $h(X)$ has a root in each interval $(-\delta_i, -\delta_{i-1})$ for $1 \leq i \leq m$. Therefore, $h(X)$ has $m$ negative roots. From equation \eqref{gX_hX}, we have shown that the roots of $g(X)$ are all negative real numbers. 
\end{proof}
According to Lemma. \ref{negativity}, there exist positive real numbers $\xi_k$ $(1 \leq k \leq n-1)$ such that $\xi_1 \leq \xi_2 \leq \cdots \leq \xi_{n-1}$ and 
\begin{equation}
\label{gX_xi}
g(X) = \prod_{k=1}^{n-1} (X+\xi_k).
\end{equation}
Since $a_k$ and $x_l^*$ in Eq. \eqref{dif_gX} do not depend on $w_n$, $-\xi_k$ (i.e., the solutions of $g(X) = 0$) are independent of $w_n$. Substituting Eqs. (\ref{gX_xi}) and (\ref{dif_X}) into Eq. \eqref{fz_gX}, we have
\begin{equation}
f(z) = (z+bw_n) \prod_{k=1}^{n-1} (z^2+bw_n z+ \xi_k w_n).
\end{equation}
It follows from $bw_n>0$ and $\xi_k w_n > 0$ that the real parts of the solutions of $f(z) = 0$ are all negative. Thus, the fixed point $(w_j^*,x_i^*)$ is asymptotically stable. We also see that the Jacobian matrix $J$ has at least one imaginary eigenvalue if and only if
\begin{equation}
\label{nece_n}
0 < w_n < \frac{4 \xi_{n-1}}{b^2}.
\end{equation}
Hence, we have proved Theorem. \ref{theo_J}.
\end{proof}
Based on the same argument as in the previous section, the dynamics of slow variable $w_n$ is obtained as $\dot w_n= O(\eps^2)$, which implies that $w_n(t) =w_n(0)+O(\eps^2)$ for $t = O(1)$. Thus, from inequality (\ref{nece_n}), we conclude that oscillation occurs if 
\begin{equation}
v_n(0) = w_n(0)< \frac{4 \xi_{n-1}}{b^2}.
\end{equation}

In summary, assuming a few parameter conditions (i.e., the inequalities (\ref{til_x_necessary_sup}) and (\ref{sum_n})), we have shown that (i) all viral mutants have a shared effective growth rate $\Lambda=O(\eps)$ if one mutant has a considerably lower replication rate $\eps$ than the others and (ii) oscillatory viral dynamics occur if the initial value of the slowest-replicating mutant is sufficiently small. These findings are the generalization of those obtained in the previous section. 

\section{Discussion and conclusion}
By performing the linear stability analysis, we find that all growth rates can synchronize to the same value that is as small as the slowest growing agent (i.e., $O(\eps)$) in the previously proposed mathematical model \cite{nowak1992}. This explains the slow exponential growth observed in numerical simulations \cite{nowak1992}. We also determine that the oscillatory dynamics appear when the initial value of the slowest-growing agent is sufficiently small.


The inequality (\ref{sum_n}) represents the same result as in the previous studies \cite{nowak1992,nowak1994evolutionary} and it is the parameter condition in which the total viral load eventually diverges. Our study reveals that this condition is valid even without assuming fast dynamics of immune cells. We also find that the inequality (\ref{til_x_necessary_sup}) is the condition for the synchronization of all mutants' replication rates. Obviously, the following inequality 
\begin{equation}
 \sum_{l=1}^{n-1} \frac{1}{q_l}  < 1, \label{til_x_necessary_}
\end{equation}
is a sufficient condition for the inequality (\ref{til_x_necessary_sup}). Note that the parameter $1/q_i (\coloneqq \frac{r_i u}{p_i k})$ characterizes the strength of the $i$-th viral mutant compared with immunity. Then, the inequality (\ref{til_x_necessary_}) suggests that synchronized replication rates are observed when the total virulence of $v_1, \ldots , v_{n-2}$, and $v_{n-1}$ is insufficiently high to cause viral load divergence. 

The model we use in this paper was originally proposed as a model of human immunodeficiency virus (HIV) \cite{nowak1992}. 
Indeed, this model is consistent with a part of the virological features of HIV; e.g., HIV infects and destroys immune cells \cite{gorry2011coreceptors,turner1999structural}, HIV produces numerous mutants in the body \cite{cuevas2015extremely}, and various viral mutants may have different virulence \cite{rambaut2004causes}, which is represented by $r_i$ and $p_i$ in this model. 
However, this model has not been used to study HIV infection in recent years because the model is considered to be not biologically accurate to describe viral dynamics in vivo:  this model does not incorporate the uninfected target cell population, which is included in the standard HIV infection model \cite{perelson2013modeling}. 

In the later studies, Nowak and Bangham proposed another model that considers both the uninfected target cell and viral mutation \cite{nowak1996population} and Iwami et al. performed the linear stability analysis of this model for the case of one viral mutant with an assumption that viral dynamics are sufficiently fast \cite{iwami2006frequency}. Thus, applying our method to the Nowak \& Bangham model is a future work. 

There are several other possible extensions in our study. We transformed the nonautonomous system (\ref{n_model}) into the autonomous one (\ref{n_model_2}) by the approximation that $e^{\Lambda t} \simeq 1$ and assuming time-scale separation. These are appropriate approximations because as far as $t=O(1)$,  $e^{\Lambda t}$ and the slow variable $w_n$ stay $\eps$-vicinity of $1$ and $w_n(0)$, respectively. However, developing a more mathematically rigorous approach that is valid even when $t$ is sufficiently large, if any, is a future challenge. In addition, the global stability of the fixed point $(w_j^*,x_i^*)$ in the subsystem (\ref{n_model_2}) is an open problem in this study. As with the previous works that studied the global stability of a fixed point in mathematical models of infectious diseases \cite{kajiwara2015construction,wang2012global,li2012global}, constructing the Lyapunov function, if possible, is expected to solve this problem. 


In conclusion, we analyze in detail the simple mathematical model of asymmetrically interacting agents. We perform linear stability analysis and find the unique features of the model; i.e., the viral load initially oscillates and then slowly increases if the parameters and initial values satisfy a few conditions. Our work also proposes an analytical method of applying stability analysis to the exponentially diverging solution by using the techniques of variable transformation and time-scale separation. 

\section*{Conflict of Interest}
The authors have no conflicts to disclose.

\section*{Author's Contributions}
Y.K. initiated this research. H.K. proposed the research direction. Y.K. performed the analysis and numerical simulations. Y.K. wrote the manuscript under the mentorship of H.K.

\begin{acknowledgments}
This study was initiated by the first-named author at the summer school of iBMath (Institute for Biology and Mathematics of Dynamic Cellular Processes The University of Tokyo). We thank the former members of iBMath, especially Y. Nakata, H. Kurihara, and T. Tokihiro, for mathematical advice.
\end{acknowledgments}

\section*{Data Availability Statement}
The data that support the findings of this study are available within the article.

\appendix*
\section{The case when $n=3$}
We consider the following system:
\begin{subequations}
\label{n3_model}
\begin{align}
\dot v_1 &= v_1(1 - x_1) , \label{n3_model_v1}\\
\dot v_2 &= \alpha_2 v_2(1 - x_2) , \label{n3_model_v2}\\
\dot v_3 &= \alpha_3 v_3(1 - x_3) , \label{n3_model_v3}\\
\dot x_1 &= v_1\left[q_1 -  x_1(1+\frac{ v_2}{ v_1}+\frac{ v_3}{ v_1})\right] \label{n3_model_x1}, \\
\dot x_2 &=  v_2\left[q_2 -  x_2(1+\frac{ v_1}{ v_2}+\frac{ v_3}{ v_2})\right] \label{n3_model_x2}, \\
\dot x_3 &=  v_3\left[q_3 -  x_3(1+\frac{ v_1}{ v_3}+\frac{ v_2}{ v_3})\right] \label{n3_model_x3}.
\end{align}
\end{subequations}
We treat $\alpha_3$ as a small parameter (i.e., $\alpha_3 \ll 1$) and put $\eps \coloneqq \alpha_3$. The other parameters are assumed to be $O(1)$. As in the case of two viral mutants, we introduce new variables $\beta \coloneqq \frac{v_2}{v_1}$ and $\gamma \coloneqq \frac{v_3}{v_1}$. Assuming the convergence of $\beta$ and $\gamma$ to positive constants $\beta^*$ and $\gamma^*$, respectively, we see that 
\begin{multline}
   x_1^* = \frac{q_1}{1+\beta^*+\gamma^*}, \quad x_2^* = \frac{\beta^* q_2}{1+\beta^*+\gamma^*}, \\
   \quad {\rm and} \quad x_3^* = \frac{\gamma^* q_3}{1+\beta^*+\gamma^*},
\end{multline}
are fixed points of the subsystem given by Eqs. (\ref{n3_model_x1}), (\ref{n3_model_x2}), and (\ref{n3_model_x3}).
Substituting $x_i = x_i^*$ into the equations $\dot \beta = \frac{v_1 \dot v_2 - v_2 \dot v_1}{v_1^2} = 0$ and $\dot \gamma = \frac{v_1 \dot v_3 - v_3 \dot v_1}{v_1^2} = 0$, we obtain the following:
\begin{equation}
\label{beta*}
\beta^* = \frac{\alpha_2 q_1 + \eps (q_1 q_3-q_1-q_3) + \alpha_2 \eps q_3}{\alpha_2 q_2 + \eps q_3 + \alpha_2 \eps (q_2 q_3 - q_2 -q_3)} = \frac{q_1}{q_2} + O(\eps), 
\end{equation}
and 
\begin{equation}
\label{gamma*}
\gamma^* = \frac{\alpha_2 (q_1 q_2-q_1-q_2) + \eps q_1  + \alpha_2 \eps q_2}{\alpha_2 q_2 + \eps q_3 + \alpha_2 \eps (q_2 q_3 - q_2 -q_3)} = \frac{q_1q_2-q_1-q_2}{q_2} + O(\eps).
\end{equation}
For sufficiently small $\eps$, the conditions $\beta^*>0$ and $\gamma^*>0$ hold if
\begin{equation}
q_1 q_2 - q_1 - q_2 > 0. \label{q_3}
\end{equation}
We assume this inequality (\ref{q_3}) for the following argument. 

Substituting $x_i = x_i^*$ and $(\beta,\gamma) = (\beta^*,\gamma^*)$ into Eqs. (\ref{n3_model_v1}), (\ref{n3_model_v2}), and (\ref{n3_model_v3}), we acquire the following equation
\begin{align*}
 \dot v_i =  \lambda v_i, 
\end{align*}
where
\begin{align}
    \lambda = \frac{\alpha_2 \eps (q_1 q_2+q_2 q_3+q_3 q_1-q_1 q_2 q_3)}{\alpha_2 q_1 q_2+\eps q_3 q_1+\alpha_2 \eps q_2 q_3} = O(\eps). \label{lam_3}
\end{align}
We also assume $q_1 q_2+q_2 q_3+q_3 q_1-q_1 q_2 q_3 > 0$ so that $\lambda > 0$. 

Next we perform the linear stability analysis. By introducing new variables $w_i \coloneqq v_i e^{-\lambda t}$, we transform Eq. \eqref{n3_model} into the following six-dimensional nonautonomous system:
\begin{subequations}
\label{n3_model_2}
\begin{align}
\dot w_1 &= w_1(1 - \lambda - x_1), \\
\dot w_2 &= \alpha_2 w_2(1 - \frac{\lambda}{\alpha_2} - x_2), \\
\dot w_3 &= \eps w_3(1 - \frac{\lambda}{\eps} - x_3), \label{n3_model_2_w3}\\
\dot x_1 &= e^{\lambda t} w_1\left[q_1 -  x_1(1+\frac{ w_2}{ w_1}+\frac{ w_3}{ w_1})\right], \\
\dot x_2 &= e^{\lambda t} w_2\left[q_2 -  x_2(1+\frac{ w_1}{w_2}+\frac{ w_3}{ w_2})\right], \\
\dot x_3 &= e^{\lambda t} w_3\left[q_3 -  x_3(1+\frac{ w_1}{ w_3}+\frac{ w_2}{ w_3})\right]. 
\end{align}
\end{subequations}
As far as $t=O(1)$, we can safely replace $e^{\lambda t}$ in Eq. \eqref{n3_model_2} with $1$ since $e^{\lambda t} = 1 + O(\lambda t) = 1 + O(\eps)$. Moreover, $w_3$ is a slow variable and $w_1, w_2, x_1, x_2$, and $x_3$ are fast variables. Thus, in a good approximation, the dynamics of these fast variables are described by the five-dimensional autonomous subsystem as below: 
\begin{subequations}
\label{n3_model_s}
\begin{align}
\dot w_1 &= w_1(1 - \lambda - x_1), \\
\dot w_2 &= \alpha_2 w_2(1 - \frac{\lambda}{\alpha_2} - x_2), \\
\dot x_1 &= w_1\left[q_1 -  x_1(1+\frac{ w_2}{ w_1}+\frac{ w_3}{ w_1})\right], \\
\dot x_2 &= w_2\left[q_2 -  x_2(1+\frac{ w_1}{w_2}+\frac{ w_3}{ w_2})\right], \\
\dot x_3 &= w_3\left[q_3 -  x_3(1+\frac{ w_1}{ w_3}+\frac{ w_2}{ w_3})\right],
\end{align}
\end{subequations}
in which $w_3$ is regarded as a constant. The internal fixed point of this subsystem (\ref{n3_model_s}) is 
\begin{multline}
(w_1, w_2, x_1, x_2, x_3) 
= \\
\left( \frac{w_3}{\gamma^*},\frac{\beta^* w_3}{\gamma^*},\frac{q_1}{1+\beta^* +\gamma^*},\frac{\beta^* q_2}{1+\beta^* +\gamma^*},\frac{\gamma^* q_3}{1+\beta^* +\gamma^*} \right).  \label{fix_3}
\end{multline}
The Jacobian matrix at this fixed point is
\begin{equation}
\label{jacobi_3_m}
\begin{pmatrix}
0 & 0 & -e_1 & 0 & 0 \\
0 & 0 & 0 & -e_2 & 0 \\
e_3 & -e_4 & -e_8 & 0 & 0 \\
-e_5 & e_6 & 0 & -e_8 & 0 \\
-e_7 & -e_7 & 0 & 0& -e_8
\end{pmatrix}
, 
\end{equation}
where $e_1, \ldots, e_8$ are positive constants given by
\begin{equation}
\label{e_dif}
\begin{split}
& e_1 = \frac{w_3}{\gamma^*},\quad e_2 = \frac{\alpha_2 \beta^* w_3}{\gamma^*},\quad e_3 = \frac{q_1 (\beta^* + \gamma^*)}{1+\beta^* + \gamma^*}, \\
& e_4 = \frac{q_1}{1+\beta^* + \gamma^*}, \quad e_5 = \frac{q_2 \beta^* }{1+\beta^* + \gamma^*}, \quad e_6 = \frac{q_2(1+\gamma^*)}{1+\beta^* + \gamma^*}, \\
& e_7 = \frac{q_3 \gamma^*}{1+\beta^* + \gamma^*},\quad \mathrm{and} \quad e_8 = \frac{(1+\beta^* + \gamma^*)w_3}{\gamma^*}.
\end{split}
\end{equation}
The eigenvalues of matrix (\ref{jacobi_3_m}) are
\begin{widetext}
\begin{multline}
\label{n3_eigen}
 -e_8, \quad -\frac{e_8}{2}+\frac{1}{2} \sqrt{e_8^2 - 2(e_1 e_3+e_2e_6)\pm 2\sqrt{(e_1 e_3-e_2 e_6)^2+4e_1 e_2  e_4 e_5}},\\
{\rm and} \quad  -\frac{e_8}{2}-\frac{1}{2} \sqrt{e_8^2 - 2(e_1 e_3+e_2e_6)\pm 2\sqrt{(e_1 e_3-e_2 e_6)^2+4e_1 e_2  e_4 e_5}}.
\end{multline}
\end{widetext}
From $e_3 e_6 > e_4 e_5$, we conclude that the real parts of these eigenvalues are all negative, which implies that the fixed point (\ref{fix_3}) is asymptotically stable. Moreover, imaginary eigenvalues appear if
\begin{align}
\label{e_ine}
e_8^2 - 2(e_1 e_3+e_2e_6) - 2\sqrt{(e_1 e_3-e_2 e_6)^2+4e_1 e_2  e_4 e_5} < 0. 
\end{align}
Substituting Eq. \eqref{e_dif} into inequality (\ref{e_ine}), we have
\begin{widetext}
\begin{align}
& w_3^2 \left(\frac{1+\beta^* + \gamma^*}{ \gamma^*}\right)^2 -2w_3 \left\{  \frac{q_1 (\beta^* + \gamma^*) + \alpha_2 q_2 \beta^* (1+\gamma^*) + \sqrt{[q_1 (\beta^* + \gamma^*) - \alpha_2 q_2 \beta^* (1+\gamma^*)]^2+4\alpha_2 q_1 q_2 (\beta^*)^2}}{\gamma^*(1+\beta^* + \gamma^*)}  \right\} < 0 \notag \\
& \iff 0< w_3 < \frac{2 \gamma^* \left\{ q_1 (\beta^* + \gamma^*) + \alpha_2 q_2 \beta^* (1+\gamma^*) + \sqrt{[q_1 (\beta^* + \gamma^*) - \alpha_2 q_2 \beta^* (1+\gamma^*)]^2+4\alpha_2 q_1 q_2 (\beta^*)^2} \right\} }{(1+\beta^* + \gamma^*)^3}. 
\label{w3condition}
\end{align}
\end{widetext}
\normalsize
The fast variables stay in the $\eps$-vicinity of the fixed point in the full system after the transient process because the subsystem of the fast variables has the stable fixed point (\ref{fix_3}). Substituting $x_3=x_3^*+O(\eps)$ into Eq. \eqref{n3_model_2_w3} and further using Eqs. (\ref{beta*}), (\ref{gamma*}), and (\ref{lam_3}), we obtain $\dot w_3= O(\eps^2)$, which implies that $w_3(t) = w_3(0)+O(\eps^2)$ for $t = O(1)$. Therefore, $w_3$ in inequality (\ref{w3condition}) can be regarded as $w_3(0)$ in a good approximation. Thus, we conclude that oscillatory viral dynamics occur if there is a viral strain whose replication rate is considerably lower than the others and its initial value is sufficiently small. We also find that all viral mutants have the following shared effective growth rate: $\lambda = O(\eps)$. Note that these findings are the same as those obtained in the case of two viral mutants.

\bibliography{HIV}

\end{document}